\documentclass[sigconf]{acmart}

\usepackage{mathtools}
\usepackage{bm}
\usepackage{algorithm}
\usepackage{algorithmic}
\usepackage{float}
\usepackage{tikz}
\usepackage{pgfplots}
\pgfplotsset{compat=1.18}
\usetikzlibrary{arrows.meta, positioning, shapes.geometric,
  decorations.pathreplacing, calc, patterns, fit, backgrounds}
\usepgfplotslibrary{fillbetween, groupplots}

\definecolor{dartblue}{HTML}{1565C0}
\definecolor{dartbluelt}{HTML}{BBDEFB}
\definecolor{basered}{HTML}{C62828}
\definecolor{baseredlt}{HTML}{FFCDD2}
\definecolor{goodgreen}{HTML}{2E7D32}
\definecolor{goodgreenlt}{HTML}{C8E6C9}
\definecolor{warnorange}{HTML}{EF6C00}
\definecolor{theorypurple}{HTML}{6A1B9A}
\definecolor{theorypurplelt}{HTML}{E1BEE7}
\definecolor{mierteal}{HTML}{00695C}
\definecolor{tier1col}{HTML}{1565C0}
\definecolor{tier2col}{HTML}{EF6C00}
\definecolor{tier3col}{HTML}{C62828}

\newcommand{\cmark}{\textcolor{goodgreen}{\checkmark}}
\newcommand{\xmark}{\textcolor{basered}{$\boldsymbol{\times}$}}
\newcommand{\rankpm}{\operatorname{rank}_{\pm}}
\newcommand{\rankrop}{\operatorname{rank}_{\mathrm{rop}}}
\newcommand{\dstar}{d^{*}}
\newcommand{\CUS}{\mathrm{CUS}}
\newcommand{\VC}{\mathrm{VC}}
\newcommand{\E}{\mathbb{E}}
\newcommand{\Sd}{S^{d-1}}
\newcommand{\calQ}{\mathcal{Q}}

\newcommand{\calB}{\mathcal{B}}
\usepackage{nicefrac}

\setcopyright{none}
\acmConference[RecSys '26]{ACM Conference on Recommender Systems}{September 28--October 2, 2026}{Minneapolis, MN, USA}
\acmYear{2026}
\copyrightyear{2026}
\acmBooktitle{Proceedings of the ACM Conference on Recommender Systems (RecSys '26), September 28--October 2, 2026, Minneapolis, MN, USA}
\acmDOI{}
\acmISBN{}

\newtheorem{proposition}{Proposition}

\begin{document}

\title[Voronoi Bottleneck: Capacity-Aware Dense Retrieval]{The Voronoi Bottleneck: Capacity-Aware Dense Retrieval\\for Product Search}

\author{Charith Chandra Sai Balne}
\authornote{Equal contribution.}
\affiliation{%
  \institution{Walmart Global Tech}
  \country{USA}
}
\email{charith.chandra.sai.balne@walmart.com}

\author{Rithwik Maramraju}
\authornotemark[1]
\affiliation{%
  \institution{Walmart Global Tech}
  \country{USA}
}
\email{rithwik.maramraju@walmart.com}

\author{Siddharth Pratap Singh}
\affiliation{%
  \institution{Walmart Global Tech}
  \country{USA}
}
\email{siddharth.singh@walmart.com}

\author{Rohit Upadhyay}
\affiliation{%
  \institution{Walmart Global Tech}
  \country{USA}
}
\email{rohit.upadhyay@walmart.com}

\author{Aditya Singh}
\affiliation{%
  \institution{Walmart Global Tech}
  \country{USA}
}
\email{Aditya.Singh3@walmart.com}

\author{Chittaranjan Tripathy}
\affiliation{%
  \institution{Walmart Global Tech}
  \country{USA}
}
\email{CTripathy@walmart.com}

\author{Yogananda Domlur Seetharama}
\affiliation{%
  \institution{Walmart Global Tech}
  \country{USA}
}
\email{yog.domlur@walmart.com}

\begin{abstract}
Dense embedding retrieval compresses all relevance information into a
single inner product, imposing a fundamental geometric
limit---the \emph{Voronoi Bottleneck}---on the number of query--document
relevance patterns expressible at fixed embedding dimension $d$.
We make three contributions.
\textbf{(1)~Unified capacity theory.}
We prove that Voronoi complexity and sign-rank are equivalent for
top-1 retrieval, yielding tight dimension bounds and a computable
diagnostic, the \emph{Capacity Utilization Score} (CUS), that predicts
per-query retrieval failure with AUC\,$>$\,$0.8$ without relevance
labels.
\textbf{(2)~Diagnosis.}
CUS identifies two capacity regimes---moderate ($\delta \gtrsim 1$), where
density-aware training yields measurable gains, and vacuous
($\delta \ll 1$), where it does not---giving practitioners
an \emph{a priori} check before investing in retraining.
\textbf{(3)~DART training.}
We introduce AT-DW-InfoNCE, an Adaptive-Temperature Density-Weighted
contrastive objective with formally derived optimal weighting
$\alpha^{*}\!=\!2.0$.
On a 100K-query synthetic product-search corpus with controlled
relevance structure, DART improves $+1.9$ Recall@100
over a same-data InfoNCE baseline ($84.9{\pm}0.0$ vs.\ $83.0{\pm}0.3$; 8 seeds,
$p < 0.001$), outperforming focal loss and temperature-schedule
alternatives.
DART requires zero inference-time overhead---it is a drop-in
training objective that improves any dual-encoder system.
\end{abstract}

\begin{CCSXML}
<ccs2012>
   <concept>
       <concept_id>10002951.10003317.10003347.10003352</concept_id>
       <concept_desc>Information systems~Retrieval models and ranking</concept_desc>
       <concept_significance>500</concept_significance>
   </concept>
   <concept>
       <concept_id>10002951.10003317.10003359.10003362</concept_id>
       <concept_desc>Information systems~Search engine indexing</concept_desc>
       <concept_significance>300</concept_significance>
   </concept>
   <concept>
       <concept_id>10010147.10010178.10010179</concept_id>
       <concept_desc>Computing methodologies~Natural language processing</concept_desc>
       <concept_significance>300</concept_significance>
   </concept>
</ccs2012>
\end{CCSXML}

\ccsdesc[500]{Information systems~Retrieval models and ranking}
\ccsdesc[300]{Information systems~Search engine indexing}
\ccsdesc[300]{Computing methodologies~Natural language processing}

\keywords{dense retrieval, embedding capacity, contrastive learning,
  product search, Voronoi partitions, sign-rank}

\renewcommand{\shortauthors}{Balne and Maramraju, et al.}

\maketitle

\section{Introduction}
\label{sec:intro}

Information retrieval at scale relies on the dual-encoder paradigm
\cite{karpukhin2020dpr}: queries and documents are independently mapped to
$\ell_2$-normalized vectors $\mathbf{u}_q, \mathbf{v}_d \in S^{d-1}$,
and relevance is scored by Maximum Inner Product Search
(MIPS) \cite{johnson2021faiss}.
A fundamental question has received insufficient formal attention:
\emph{what is the maximum number of distinct relevance patterns a
$d$-dimensional embedding system can represent?}

We formalize this as the \emph{Voronoi Bottleneck}: MIPS scoring induces
a Voronoi partition of the embedding sphere, where each document owns a
convex cell, and a query is correctly retrieved only if it falls inside
the relevant document's cell (Figure~\ref{fig:voronoi}).
When the dimension $d$ is too small relative to the relevance structure,
some queries are geometrically forced into the wrong cell.

\textbf{Contributions.}
\begin{enumerate}
\item \textbf{Unified capacity theory (\S\ref{sec:theory}).}
  We prove that Voronoi complexity equals sign-rank
  (Theorem~\ref{thm:main}), providing the first tight geometric
  characterization of embedding capacity.
  We introduce CUS, a label-free diagnostic that predicts per-query
  retrieval failure with AUC\,$>$\,$0.8$.
\item \textbf{Diagnosis (\S\ref{sec:diag}).}
  CUS identifies two capacity regimes---moderate ($\delta \gtrsim 1$) and
  vacuous ($\delta \ll 1$)---giving practitioners an \emph{a priori}
  check on whether capacity-aware training will help.
\item \textbf{DART (\S\ref{sec:dart}--\ref{sec:expts}).}
  AT-DW-InfoNCE with $\alpha^{*}\!=\!2.0$ improves $+1.9$ R@100
  on product search ($p < 0.001$, 8 seeds), outperforming focal loss
  and temperature-schedule baselines.
  It requires zero inference overhead and is a drop-in
  training objective that improves any dual-encoder system.
\end{enumerate}

\begin{figure}[t]
\centering
\begin{tikzpicture}[
  every node/.style={font=\scriptsize},
  queryok/.style={fill=goodgreen, circle, inner sep=1.2pt},
  queryfail/.style={fill=basered, circle, inner sep=1.2pt},
  doclbl/.style={font=\tiny\bfseries, text=dartblue!80!black},
]
\begin{scope}[shift={(-2.2,0)}]
  \node[above, font=\tiny\bfseries] at (0,1.75) {(a) $d < d^*$: Under-capacity};
  \draw[thick, fill=blue!2] (0,0) circle (1.5);
  \draw[dartblue!40, thick] (-1.5,0) -- (1.5,0);
  \draw[dartblue!40, thick] (0,-1.5) -- (0,1.5);
  \draw[dartblue!40, thick] (-1.06,-1.06) -- (1.06,1.06);
  \fill[dartbluelt, opacity=0.25] (0,0) -- (90:1.5) arc(90:45:1.5) -- cycle;
  \fill[goodgreenlt, opacity=0.25] (0,0) -- (180:1.5) arc(180:90:1.5) -- cycle;
  \fill[baseredlt, opacity=0.15] (0,0) -- (0:1.5) arc(0:-90:1.5) -- cycle;
  \node[doclbl] at (135:0.9) {$\Omega_{d_1}$};
  \node[doclbl] at (67:0.9) {$\Omega_{d_2}$};
  \node[doclbl] at (-45:0.9) {$\Omega_{d_3}$};
  \node[queryok] (q1) at (150:0.7) {};
  \node[right, font=\tiny, goodgreen!80!black] at (q1.east) {\cmark};
  \node[queryfail] (q2) at (85:0.5) {};
  \node[left, font=\tiny, basered] at (q2.west) {\xmark};
  \node[queryfail] (q4) at (8:0.6) {};
  \node[left, font=\tiny, basered] at (q4.west) {\xmark};
  \node[below, font=\tiny, text width=2.8cm, align=center, text=black!70] at (0,-1.7)
    {Few cells; hard queries\\land in wrong regions};
\end{scope}
\draw[-{Stealth[length=5pt]}, thick, dartblue!60] (-0.3,0) -- (0.3,0)
  node[midway, above, font=\tiny] {$+d$};
\begin{scope}[shift={(2.2,0)}]
  \node[above, font=\tiny\bfseries] at (0,1.75) {(b) $d \geq d^*$: Sufficient};
  \draw[thick, fill=blue!2] (0,0) circle (1.5);
  \draw[dartblue!40, thick] (-1.5,0) -- (1.5,0);
  \draw[dartblue!40, thick] (0,-1.5) -- (0,1.5);
  \draw[dartblue!40, thick] (-1.06,-1.06) -- (1.06,1.06);
  \draw[dartblue!40, thick] (1.06,-1.06) -- (-1.06,1.06);
  \draw[dartblue!25, densely dashed] (-1.39,0.57) -- (1.39,-0.57);
  \node[doclbl] at (112:1.0) {$\Omega_1$};
  \node[doclbl] at (67:1.0) {$\Omega_2$};
  \node[doclbl] at (33:1.0) {$\Omega_3$};
  \node[doclbl] at (158:1.0) {$\Omega_4$};
  \node[doclbl] at (-60:1.0) {$\Omega_5$};
  \node[queryok] (r1) at (120:0.7) {};
  \node[right, font=\tiny, goodgreen!80!black] at (r1.east) {\cmark};
  \node[queryok] (r2) at (75:0.5) {};
  \node[right, font=\tiny, goodgreen!80!black] at (r2.east) {\cmark};
  \node[queryok] (r3) at (40:0.6) {};
  \node[right, font=\tiny, goodgreen!80!black] at (r3.east) {\cmark};
  \node[below, font=\tiny, text width=2.8cm, align=center, text=black!70] at (0,-1.7)
    {Fine cells; all queries\\correctly resolved};
\end{scope}
\end{tikzpicture}
\caption{\textbf{The Voronoi Bottleneck.}
  A dual encoder assigns each document a Voronoi cell on the embedding
  sphere. \textbf{(a)}~When $d < d^*$, cells are too coarse and hard
  queries are misassigned.
  \textbf{(b)}~At $d \geq d^*$, the partition resolves all relevance
  patterns. Theorem~\ref{thm:main} proves $d^* = \rankrop(A)$.}
\label{fig:voronoi}
\end{figure}

\section{Theory: The Voronoi Bottleneck}
\label{sec:theory}

Let $A \in \{0,1\}^{m \times n}$ be the binary relevance matrix for $m$
queries and $n$ documents.
A dual encoder embeds queries as $\mathbf{u}_i \in \Sd$ and documents as
$\mathbf{v}_j \in \Sd$.
MIPS scoring induces a Voronoi partition \cite{voronoi1908} $\{\Omega_j\}$ of the sphere,
where $\Omega_j = \{\mathbf{u} \in \Sd : \mathbf{u}^\top \mathbf{v}_j
\geq \mathbf{u}^\top \mathbf{v}_k\ \forall k\}$.
We define \emph{Voronoi Complexity} $\VC(A) = \min\,d$ such that there
exist embeddings achieving perfect top-1 recall, and
\emph{row-wise order-preserving rank}
$\rankrop(A) = \min\{\operatorname{rank}(B) : B_{ij} > B_{ik}
\text{ whenever } A_{ij} > A_{ik}\}$ \cite{weller2025limit}.

\begin{theorem}[Voronoi--Sign-Rank Equivalence]
  \label{thm:main}
  For any $A \in \{0,1\}^{m \times n}$ with no all-zero rows,
  $\VC(A) = \rankrop(A)$,
  and therefore $\rankpm(2A - 1) - 1 \leq \dstar \leq \rankpm(2A - 1)$.
\end{theorem}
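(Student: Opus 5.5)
The plan is to read ``perfect top-1 recall'' the way $\rankrop$ is defined: for every query $i$, each relevant document must score strictly above each irrelevant one, i.e.\ $\mathbf{u}_i^\top\mathbf{v}_j > \mathbf{u}_i^\top\mathbf{v}_k$ whenever $A_{ij}>A_{ik}$. I would prove the equality $\VC(A)=\rankrop(A)$ by going back and forth between embeddings and low-rank factorizations. After that, I would sandwich $\rankrop(A)$ between sign-ranks by adding a per-row threshold.

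\textbf{Step 1 ($\rankrop \le \VC$).} Take embeddings in dimension $d$ that achieve perfect recall. Stack them as $U\in\mathbb{R}^{m\times d}$ and $V\in\mathbb{R}^{n\times d}$. Then $B=UV^\top$ has rank at most $d$ and is row-wise order-preserving by definition, so $\rankrop(A)\le d$.

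\textbf{Step 2 ($\VC \le \rankrop$).} Factor an optimal $B=UV^\top$ with inner dimension $r=\rankrop(A)$, and take the rows of $U$ and $V$ as embeddings. Rescaling a query vector by a positive constant multiplies its whole row of $B$ by that constant, so row order is unchanged; hence query normalization is free.

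Document normalization is the delicate point, because rescaling $\mathbf{v}_j$ rescales a column of $B$, which can reorder a row. There are two options:
\begin{itemize}
\item If the definition of $\VC$ allows unnormalized documents (plain MIPS), the argument above is complete.
\item If documents must lie on $\Sd$, I would fall back on the sign-rank route in Step 4, which produces unit-norm embeddings directly.
\end{itemize}
I expect this normalization issue to be the main obstacle. It must be stated carefully, and it is why the final conclusion is phrased as a sandwich rather than an exact formula.

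\textbf{Step 3 (lower bound).} Let $B$ be order-preserving. For each row $i$, choose a threshold $\tau_i$ strictly between the largest irrelevant score and the smallest relevant score. Such a gap exists because $B$ is order-preserving, and the hypothesis of no all-zero rows guarantees a relevant entry in every row. If a row is all ones, any $\tau_i$ below its minimum works. Then $M=B-\boldsymbol{\tau}\mathbf{1}^\top$ has sign pattern $2A-1$ and $\operatorname{rank}(M)\le\operatorname{rank}(B)+1$. This gives $\rankpm(2A-1)\le \rankrop(A)+1$.

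\textbf{Step 4 (upper bound).} Take $M=UV^\top$ of rank $r=\rankpm(2A-1)$ whose sign pattern is $2A-1$. In every row, the positive entries exceed the negative ones, so $M$ is already order-preserving. Moreover, scaling each column $j$ by $1/\|\mathbf{v}_j\|$ and each row by $1/\|\mathbf{u}_i\|$ preserves all signs, and therefore preserves order-preservation. This yields unit-norm query and document embeddings in dimension $r$, so both $\rankrop(A)\le r$ and the spherical $\VC(A)\le r$ hold.

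\textbf{Conclusion.} Combining Steps 1--4 gives $\rankpm(2A-1)-1\le \dstar\le \rankpm(2A-1)$, where $\dstar$ denotes the common value $\VC(A)=\rankrop(A)$. This holds exactly in the MIPS reading of $\VC$. In the spherical reading, the same sandwich still holds: the upper bound comes from Step 4, and the lower bound comes from Steps 1 and 3, since spherical embeddings are a special case of order-preserving factorizations.
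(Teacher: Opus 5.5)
Your proposal is correct and follows the paper's proof. Steps~1--2 are exactly its two directions: queries are rescaled freely and documents are left unnormalized. Steps~3--4 give a self-contained proof of the sign-rank sandwich, which the paper simply cites from Weller et al.'s Proposition~2.

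Your caution about document normalization is well founded. The paper's sketch likewise uses unnormalized document vectors even though its setup requires $\mathbf{v}_j\in\Sd$. Under a strictly spherical reading the equality $\VC=\rankrop$ genuinely fails. For example, $A=\left(\begin{smallmatrix}1&0&0\\1&1&0\end{smallmatrix}\right)$ has $\rankrop(A)=1$. On $S^0=\{\pm1\}$, however, isolating document~1 in row~1 forces $\mathbf{v}_2=\mathbf{v}_3$, which contradicts row~2. So in that reading only the sandwich holds, and your Steps~1, 3 and~4 do establish it.
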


\textit{Proof sketch.}
($\VC \leq \rankrop$): Given rank-$r$ factorization $B = UV^\top$
preserving row orderings, normalize query rows to $S^{r-1}$ (preserves
orderings since $\|U_i\|$ is row-constant); unnormalized document vectors
induce a valid MIPS partition.
($\rankrop \leq \VC$): Any $d$-dimensional MIPS partition achieving
perfect top-1 yields score matrix $B_{ij} = \mathbf{u}_i^\top\mathbf{v}_j$
of rank $\leq d$, which is row-order-preserving.
The sign-rank bounds then follow from combining $\VC = \rankrop$
with \cite{weller2025limit}, Proposition~2. \qed

\textbf{Practical implication.}
By Warren's bound \cite{warren1968},
\[
  \dstar \;\geq\; \Omega\!\bigl(\delta \log n \,/\, \log \delta\bigr)
\]
where $\delta = |E|/n$ is the average document degree.
This provides a \emph{computable} lower bound from Qrel statistics:
our product-search corpus ($\delta \approx 1.43$, 100K queries) yields
$\dstar \approx 50$.
Note that $\dstar$ is a \emph{lower bound} for perfect recall under
optimal embedding; with imperfect SGD training, capacity pressure
manifests well above $\dstar$ because gradient-based optimization
cannot fully exploit the available dimensions.
Even at $d = 768$, hard queries with many near-duplicates compete for
geometric space, and DART's adaptive temperature reallocates
gradient signal to these capacity-stressed queries.
LIMIT \cite{weller2025limit} ($\delta = 0.04$) gives $\dstar = 1$ (vacuous):
the theory predicts no method benefits from capacity awareness there.

\subsection{Capacity Utilization Score (CUS)}
\label{sec:cus}

\begin{definition}[CUS]
  \label{def:cus}
  For query subset $\calQ' \subseteq \calQ$ and truncation dimension
  $r = \lceil \rankpm(2A_{\calQ'} - 1) \rceil$:
  \[
    \CUS(\calQ', r)
    = \frac{\mathrm{Recall@1}(B^{(r)},\,\calQ')}
           {\mathrm{Recall@1}(B,\,\calQ')}\,,
  \]
  where $B^{(r)}$ is the score matrix from a rank-$r$ truncated encoder
  (e.g., Matryoshka at $d'\!=\!r$).
\end{definition}

CUS serves as both a \textbf{failure predictor} (AUC\,$>$\,$0.8$ on both
LIMIT and product search, without ground-truth labels) and a
\textbf{model-selection criterion} (Spearman $\rho > 0.9$ with downstream
nDCG@10 across pretrained encoders).
In production, CUS enables capacity planning: given a target corpus
with estimated $\delta$, Warren's bound provides the minimum $d$ for
a desired CUS level.

\textbf{Production monitoring.}
Because CUS requires no relevance labels at inference time, it can be
computed periodically via a single Matryoshka forward pass per query.
Degradation in CUS after catalog expansion or query-distribution shift
alerts teams to retrain before downstream metrics degrade.

\section{Diagnosis: Where Embeddings Fail}
\label{sec:diag}

We diagnose embedding failures using CUS on two benchmarks and identify
three production-relevant failure modes.

\textbf{The dimension cliff.}
Using Matryoshka-style truncation \cite{kusupati2022mrl} at
$d' \in \{8, 16, 32, \ldots, 384\}$,
Recall@100 degrades steeply below $\dstar$ and plateaus above it.
Easy queries ($D_1$ decile) saturate at $d'=32$; hard queries ($D_{10}$)
require $d' = 256$.
This directly informs production dimension selection: teams can choose
the smallest $d'$ that serves their query-difficulty distribution.

\textbf{Failure taxonomy.}
Manual analysis of 500 failed retrievals reveals three modes
(Table~\ref{tab:failures}), each with a distinct operational remedy.
Capacity failures (40.2\%), where the embedding dimension is inherently
insufficient for the relevance structure, are precisely what DART targets.

\begin{table}[t]
\centering
\caption{Failure taxonomy on product search (500 queries analyzed).}
\label{tab:failures}
\small
\begin{tabular}{@{}lcl@{}}
\toprule
\textbf{Type} & \textbf{Freq.} & \textbf{Operational Remedy} \\
\midrule
Boundary    & 35.2\% & Reranking stage \\
Representation & 24.6\% & Better training data \\
Capacity    & 40.2\% & DART (this work) \\
\bottomrule
\end{tabular}
\end{table}

\section{DART: Density-Aware Retrieval Training}
\label{sec:dart}

\begin{figure}[t]
\centering
\resizebox{\columnwidth}{!}{%
\begin{tikzpicture}[
  node distance=0.6cm and 0.9cm,
  >=Stealth,
  box/.style={draw, rounded corners=3pt, minimum height=0.7cm,
              minimum width=1.8cm, font=\scriptsize, align=center,
              thick},
  encoderbox/.style={box, fill=blue!10, draw=blue!70!black,
                     minimum width=2.2cm},
  lossbox/.style={box, fill=purple!10, draw=purple},
  headbox/.style={box, fill=orange!15, draw=orange},
  tierbox/.style={box, minimum width=2.4cm, minimum height=0.6cm},
  inputbox/.style={box, fill=gray!8, draw=gray!60, minimum width=1.5cm},
  arr/.style={->, thick, blue!50!black},
  phase/.style={font=\scriptsize\bfseries, text=black!60},
]

\node[phase] (train_label) at (-4.8, 2.4) {\textsc{Training}};
\draw[gray!30, thick, dashed] (-5.2, 1.95) -- (5.5, 1.95);

\node[inputbox] (query) at (-4, 3.2) {Query $q$};
\node[encoderbox, right=of query] (enc_q) {BERT Encoder\\$f_\theta$};
\node[box, fill=blue!5, draw=blue!40, right=of enc_q] (hq) {$\mathbf{h}_q$};

\node[inputbox] (doc) at (-4, 0.7) {Doc $d^+\!/d^-$};
\node[encoderbox, right=of doc] (enc_d) {BERT Encoder\\$f_\theta$};
\node[box, fill=blue!5, draw=blue!40, right=of enc_d] (vd) {$\mathbf{v}_d$};

\node[box, fill=teal!10, draw=teal, font=\tiny] (density)
  at ($(hq)!0.5!(vd)$) {$\tilde{\rho}_i$ density};
\node[lossbox, minimum width=2.6cm, right=1.2cm of density] (atdw)
  {AT-DW-InfoNCE\\$\mathcal{L}_\mathrm{AT\text{-}DW}$};

\node[headbox, above=0.8cm of atdw] (head_train)
  {Difficulty Head\\$W_t \in \mathbb{R}^{3 \times d}$};
\node[lossbox, above=0.5cm of head_train] (celoss)
  {$\mathcal{L}_\mathrm{CE}(\hat{t}, t^*)$};

\node[box, fill=red!10, draw=red, minimum width=2.8cm,
      font=\scriptsize\bfseries, below=1.2cm of atdw] (total)
  {$\mathcal{L}_\mathrm{total}
    = \mathcal{L}_\mathrm{AT\text{-}DW}
    + \lambda\mathcal{L}_\mathrm{CE}$};

\draw[arr] (query) -- (enc_q);
\draw[arr] (doc) -- (enc_d);
\draw[arr] (enc_q) -- (hq);
\draw[arr] (enc_d) -- (vd);
\draw[arr] (hq) -- (atdw.north west);
\draw[arr] (vd) -- (atdw.south west);
\draw[arr] (density) -- (atdw);
\draw[arr] (hq) -- (head_train.west);
\draw[arr] (head_train) -- (celoss);
\draw[arr] (atdw) -- (total);
\draw[arr] (celoss.east) -- ++(1.3,0) |- (total.east);

\draw[{Stealth}-{Stealth}, gray!50] (enc_q) -- (enc_d)
  node[midway, right, font=\tiny] {shared};

\node[phase] at (-4.8, -0.6) {\textsc{Inference}};
\draw[gray!30, thick, dashed] (-5.2, -1.0) -- (5.5, -1.0);

\node[inputbox] (iq) at (-4, -2.0) {Query $q$};
\node[encoderbox, right=of iq] (ienc) {Encoder $f_\theta$};
\node[box, fill=blue!5, draw=blue!40, right=of ienc] (ihq) {$\mathbf{h}_q$};
\node[headbox, below=0.7cm of ihq, minimum width=2.5cm] (ihead)
  {$\hat{t}(q) = \arg\max W_t \mathbf{h}_q$};

\node[diamond, draw=orange, fill=orange!10, thick, inner sep=1pt,
      font=\tiny\bfseries, right=1cm of ihq] (router) {Route};

\node[tierbox, fill=cyan!10, draw=cyan,
      right=0.8cm of router, yshift=0.9cm] (t1)
  {{\scriptsize\textbf{Tier 1:} MIPS}};
\node[tierbox, fill=orange!10, draw=orange,
      right=0.8cm of router] (t2)
  {{\scriptsize\textbf{Tier 2:} MaxSim}};
\node[tierbox, fill=red!10, draw=red,
      right=0.8cm of router, yshift=-0.9cm] (t3)
  {{\scriptsize\textbf{Tier 3:} Cross-Enc}};

\draw[arr] (iq) -- (ienc);
\draw[arr] (ienc) -- (ihq);
\draw[arr] (ihq) -- (router);
\draw[arr] (ihq) -- (ihead);
\draw[arr] (ihead.east) -| (router.south);

\draw[->, thick, cyan] (router.north) |- (t1.west)
  node[near end, above, font=\tiny] {easy};
\draw[->, thick, orange] (router.east) -- (t2.west)
  node[midway, above, font=\tiny] {medium};
\draw[->, thick, red] (router.south) |- (t3.west)
  node[near end, below, font=\tiny] {hard};

\end{tikzpicture}%
}
\caption{\textbf{DART architecture.}
  \textbf{Top:} Training uses AT-DW-InfoNCE (modulated by local density
  $\tilde{\rho}_i$) plus an auxiliary difficulty head ($\lambda\!=\!0.1$);
  both flow into $\mathcal{L}_\mathrm{total}$.
  \textbf{Bottom:} At inference, the difficulty head optionally enables
  tiered routing (Appendix~\ref{app:routing}); our experiments use
  bi-encoder scoring only.}
\label{fig:arch}
\end{figure}

\subsection{AT-DW-InfoNCE Objective}

We introduce per-query temperature modulation driven by local density:
\begin{equation}
  \mathcal{L}_\mathrm{AT\text{-}DW\text{-}InfoNCE}
  = -\frac{1}{|\calB|}\sum_{i \in \calB}
    w_i\log
    \frac{e^{s_{i,i^+}/\tau_i}}{\sum_{k \in \calB} e^{s_{ik}/\tau_i}},
  \label{eq:atdw}
\end{equation}
where $w_i = 1 + \alpha\,\tilde{\rho}_i$,
$\tau_i = \tau_0/w_i$ ($\tau_0 = 0.07$), and
$\tilde{\rho}_i = k_i / k_{\max}$ is the normalized local Qrel density
(number of relevant documents for $q_i$, normalized by max).
This requires per-query relevance counts at training time---standard
when training with curated judgments (product catalogs, editorial
assessments).

\begin{proposition}[Optimal Density Weight]
  \label{prop:alpha}
  The weight $\alpha^*$ that equalizes effective learning rates across
  easy and hard query partitions satisfies
  $\alpha^* = (g_E - g_H) / (g_H \cdot \E[\tilde{\rho} \mid \calQ_H])$,
  where $g_E, g_H$ are the expected gradient norms.
\end{proposition}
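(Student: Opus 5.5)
The plan is to make ``effective learning rate'' precise and then solve a linear equation in $\alpha$. For a query $i$ I define its effective learning rate as the norm of its contribution to the gradient of \eqref{eq:atdw}. Write $\ell_i(\tau)$ for the per-query InfoNCE term at temperature $\tau$. The contribution of query $i$ is then $w_i\,\nabla \ell_i(\tau_i)$. I partition the queries into an easy set $\calQ_E$ and a hard set $\calQ_H$. The baseline expected gradient norms, taken under the unweighted objective at $\tau_0$, are $g_E = \E[\|\nabla \ell_i(\tau_0)\| \mid \calQ_E]$ and $g_H = \E[\|\nabla \ell_i(\tau_0)\| \mid \calQ_H]$. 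The stated formula, with $g_E > g_H$ making $\alpha^*>0$, shows that hard queries are the ones under-served at baseline, so the goal is to boost them to match easy queries.

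The key modelling step is to reduce the weighted effective rate to first order in the weight. For easy queries I will assume $\tilde\rho_i \approx 0$, which is the regime where density only matters on the hard partition, so $w_i\approx 1$ and the easy effective rate stays $g_E$. For hard queries the effective rate is $\E[w_i \|\nabla\ell_i(\tau_i)\| \mid \calQ_H]$. I will treat the temperature rescaling $\tau_i=\tau_0/w_i$ as a reparametrisation that the weight $w_i$ already accounts for. Concretely, I make the decoupling assumption that the gradient magnitude at the modulated temperature has conditional expectation $g_H$, independent of $\tilde\rho_i$ within $\calQ_H$. This gives the hard effective rate
\[
\E\bigl[(1+\alpha\tilde\rho_i)\bigr]\,g_H \;=\; g_H\bigl(1+\alpha\,\E[\tilde\rho\mid\calQ_H]\bigr),
\]
using linearity of expectation and the independence assumption to factor the product.

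Setting the two rates equal gives $g_E = g_H + \alpha\, g_H \E[\tilde\rho\mid\calQ_H]$. Solving yields $\alpha^* = (g_E-g_H)/(g_H\,\E[\tilde\rho\mid\calQ_H])$, which is well defined whenever $\E[\tilde\rho\mid\calQ_H]>0$. Nonnegativity of $\alpha^*$ follows from $g_E\ge g_H$, and I would note that this holds empirically. Uniqueness holds because the hard rate is strictly increasing and affine in $\alpha$.

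The main obstacle is justifying the decoupling assumption. The temperature $\tau_i=\tau_0/w_i$ changes the softmax sharpness, so $\|\nabla\ell_i(\tau_i)\|$ genuinely depends on $w_i$. My first approach would be a first-order expansion in $\alpha$ around $\tau_0$. By the chain rule, $\nabla_s \ell_i(\tau) = \tfrac{1}{\tau}(p_i - e_{i^+})$. The $1/\tau_i = w_i/\tau_0$ factor is exactly absorbed if the softmax deviation $\|p_i-e_{i^+}\|$ is treated as locally constant. I would state this locally-constant-softmax approximation explicitly as the regime in which the proposition holds. Under this approximation, the $w_i$ prefactor in the hard effective rate should be read as the $1/\tau_i$ rescaling rather than an additional multiplicative weight, so the two effects are not double-counted.
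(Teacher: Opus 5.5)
The paper states this proposition without proof; the formula is simply the solution of $g_E=g_H\bigl(1+\alpha\,\E[\tilde\rho\mid\calQ_H]\bigr)$. Your second and third paragraphs (easy weights $\approx 1$, hard rate $g_H\,\E[w_i\mid\calQ_H]$, solve the affine equation) therefore reproduce what the paper implicitly assumes.

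The gap is in your last paragraph, the only place you try to \emph{justify} the decoupling assumption. That justification fails for the objective in Eq.~\eqref{eq:atdw}. The loss contains \emph{both} the explicit weight $w_i$ and the temperature $\tau_i=\tau_0/w_i$. By your own chain-rule formula, the score-gradient of query $i$'s term is
\[
  w_i\cdot\frac{1}{\tau_i}\bigl(p_i(\tau_i)-e_{i^+}\bigr)
  =\frac{w_i^{2}}{\tau_0}\bigl(p_i(\tau_i)-e_{i^+}\bigr).
\]
Under your locally-constant-softmax approximation, $\|\nabla\ell_i(\tau_i)\|\approx w_i\,\|\nabla\ell_i(\tau_0)\|$. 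This contradicts, rather than supports, the assumption $\E\bigl[\|\nabla\ell_i(\tau_i)\|\mid\calQ_H\bigr]=g_H$, and the weighted contribution scales as $w_i^2$. You cannot ``read'' the $w_i$ prefactor as the $1/\tau_i$ rescaling: that deletes a factor present in the loss, so you would be analysing a different objective (adaptive temperature with no density weight). Carried through honestly, your approach yields the condition $g_E=g_H\,\E\bigl[(1+\alpha\tilde\rho)^2\mid\calQ_H\bigr]$. Its first-order (in $\alpha$) solution is $(g_E-g_H)/\bigl(2g_H\,\E[\tilde\rho\mid\calQ_H]\bigr)$, half the stated value: $1.0$ rather than $2.0$ with the paper's numbers. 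Its exact solution when $\tilde\rho\equiv 0.5$ on $\calQ_H$ is $2(\sqrt{2}-1)\approx 0.83$.

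So the stated formula does not follow from any approximation to the AT-DW gradient. It holds only if ``effective learning rate'' is \emph{defined} as $w_i$ times the fixed-$\tau_0$ gradient norm. Equivalently, it is exact for the pure density-weighted loss (rows B/C of Table~\ref{tab:ablation}). There the gradient is exactly $w_i\nabla\ell_i(\tau_0)$, and the only extra assumption needed is independence of $\tilde\rho_i$ and $\|\nabla\ell_i(\tau_0)\|$ within $\calQ_H$.

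To make your proposal sound, drop the chain-rule reconciliation and state that definition explicitly. Also state $\E[\tilde\rho\mid\calQ_E]\approx 0$, which is only an approximation, since $\tilde\rho_i\ge 1/k_{\max}>0$ for every query with a positive. The proposition is then a one-line algebraic consequence, which is all the paper's statement can support.
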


\textbf{Instantiation.}
Measured over 10K training steps: $g_E = 0.15$, $g_H = 0.075$,
$\E[\tilde{\rho} \mid \calQ_H] = 0.5$, yielding
$\alpha^* = (0.15 - 0.075)/(0.075 \cdot 0.5) = 2.0$.

\subsection{Multi-Task Training}

DART adds an auxiliary difficulty head
$\hat{t}(q) = \arg\max\; W_t\,\mathbf{h}_q$ ($W_t \in \mathbb{R}^{3 \times d}$):
\begin{equation}
  \mathcal{L}_\mathrm{total} =
    \mathcal{L}_\mathrm{AT\text{-}DW\text{-}InfoNCE}
    + \lambda\,\mathcal{L}_\mathrm{CE}(\hat{t}, t^*),
  \quad \lambda = 0.1,
\end{equation}
where $t^*$ is the oracle difficulty tier from CUS.
The ablation (\S\ref{sec:expts}) confirms this head has negligible
impact on R@100---AT-DW-InfoNCE alone drives the gains---but it enables
optional tiered routing at inference for latency-sensitive deployments.

\textbf{Routing policy (optional).}
The difficulty head enables tiered routing at inference
(Appendix~\ref{app:routing}, Theorem~\ref{thm:pareto}).
On Product Search ($\delta = 1.43$, approximately uniform difficulty),
routing degenerates to a single tier; our experiments use bi-encoder
scoring only.

\section{Experiments}
\label{sec:expts}

\subsection{Setup}

\textbf{Datasets.}
\emph{Product Search}: 100K query--product pairs with LLM-generated
queries over e-commerce catalog data ($\delta = 1.43$).
We deliberately use synthetic queries with controlled relevance counts
to isolate the capacity effect predicted by Theorem~\ref{thm:main},
independent of click-noise confounds present in production logs.
This controlled design enables a clean test of the theory's predictions:
if DART fails on data specifically constructed to exhibit capacity
pressure, the theory is falsified; the synthetic setting is thus
the \emph{hardest} test for the theoretical claims, not the weakest.
\emph{LIMIT} \cite{weller2025limit}: 50K docs, 1035 queries, $\delta = 0.04$
(vacuous capacity regime).
\emph{BEIR} \cite{thakur2021beir}: 6 datasets for zero-shot transfer
(SciFact, NFCorpus, FiQA, ArguAna, SciDocs, TREC-COVID).

\textbf{Training.}
Backbone: \texttt{e5-base-v2} \cite{wang2022e5} (110M params), $d=768$,
mean pooling.
AdamW \cite{loshchilov2017adamw}, lr\,$=$\,$2{\times}10^{-5}$
(linear scaling rule), warmup 300 steps, batch size 256,
15 epochs with early stopping (patience 5).
All results: mean\,$\pm$\,std over 8 seeds (42--49).
Hardware: 2$\times$A100 40\,GB.

\textbf{Baselines.}
(i)~\emph{Pretrained}: ANCE \cite{xiong2021ance}, e5-base \cite{wang2022e5},
GTE-base \cite{li2023gte}, BGE-base \cite{xiao2023bge}---published checkpoints, no fine-tuning.
(ii)~\emph{Fine-tuned}: e5-base + InfoNCE on the same data (fair comparison).
(iii)~\emph{Alternative objectives}: focal loss ($\gamma\!=\!2.0$) and
cosine temperature schedule ($\tau$: $0.10 \to 0.02$)---both using the
same backbone and data.
We deliberately hold the backbone, data, and training budget constant
across (ii)--(iii) so that all differences are attributable solely
to the loss function; comparing against instruction-tuned models
(e.g., Gecko \cite{lee2024gecko}) would conflate loss design with
training data scale.

\subsection{Product Search Results}

\begin{table}[t]
\centering
\caption{Product search results (100K queries). DART achieves the
  highest R@100 among all methods.
  }
\label{tab:main}
\small
\begin{tabular}{@{}lccc@{}}
\toprule
\textbf{Model} & \textbf{R@100} & \textbf{R@10} & \textbf{R@1} \\
\midrule
\multicolumn{4}{@{}l}{\textit{Pretrained (no fine-tuning)}} \\
ANCE                       & 52.3 & 28.5 & 7.9 \\
e5-base                    & 71.6 & 41.9 & 11.0 \\
GTE-base                   & 73.9 & 44.0 & 11.8 \\
BGE-base                   & 73.0 & 43.6 & 12.1 \\
\midrule
\multicolumn{4}{@{}l}{\textit{Fine-tuned on product search}} \\
e5-base + InfoNCE          & 83.0$\pm$0.3 & 50.3$\pm$0.4 & 13.3$\pm$0.1 \\
e5-base + Focal ($\gamma$=2)    & 84.2$\pm$0.2 & 51.0$\pm$0.2 & 13.4$\pm$0.1 \\
e5-base + Temp Sched.      & 83.7$\pm$0.2 & 50.5$\pm$0.2 & 13.2$\pm$0.1 \\
\textbf{DART (ours)}       & \textbf{84.9}$\pm$0.0 & \textbf{51.6}$\pm$0.1 & \textbf{13.6}$\pm$0.1 \\
\midrule
$\Delta$ vs.\ InfoNCE & $+1.9$ & $+1.3$ & $+0.3$ \\
$\Delta$ vs.\ Focal     & $+0.7$ & $+0.6$ & $+0.2$ \\
$\Delta$ vs.\ Temp Sched. & $+1.2$ & $+1.1$ & $+0.4$ \\
\bottomrule
\end{tabular}
\end{table}

Table~\ref{tab:main} shows that DART achieves $84.9{\pm}0.0$ R@100,
a $+1.9$ improvement over the same-data InfoNCE baseline.
Notably, DART's variance across 8 seeds is $\pm 0.0$ (i.e., all seeds
converge to the same R@100), compared to $\pm 0.3$ for InfoNCE---indicating
that density-aware training eliminates seed-dependent performance
variability, a desirable property for production reproducibility.
DART also outperforms two natural alternative objectives:
focal loss \cite{robinson2021hard} ($+0.7$ R@100) and cosine temperature
schedule ($+1.2$ R@100).
All improvements are statistically significant
(paired bootstrap, 10K resamples, $p < 0.001$).
Since DART adds zero inference cost, any positive training-time gain
translates directly to production value with no latency--accuracy tradeoff.

\subsection{Ablation Study}

\begin{table}[t]
\centering
\caption{Ablation on product search R@100 (all rows: same backbone,
  same data, mean$\pm$std over 8 seeds).
  Density weighting alone (B,\,C) does not improve over InfoNCE;
  adaptive temperature (D) is the critical mechanism.}
\label{tab:ablation}
\small
\begin{tabular}{@{}llcccc@{}}
\toprule
  & \textbf{Config} & \textbf{DW} & \textbf{AT-$\tau$} & \textbf{Head} & \textbf{R@100} \\
\midrule
A & InfoNCE baseline              & -- & -- & -- & 83.0$\pm$0.3 \\
B & DW ($\alpha{=}1.0$)           & \checkmark & -- & -- & 82.7$\pm$0.3 \\
C & DW ($\alpha^*{=}2.0$)         & \checkmark & -- & -- & 82.5$\pm$0.3 \\
D & + Adapt. temp                  & \checkmark & \checkmark & -- & \textbf{84.9}$\pm$0.1 \\
F & \textbf{DART full}             & \checkmark & \checkmark & \checkmark & \textbf{84.9}$\pm$0.0 \\
\bottomrule
\end{tabular}
\end{table}

Table~\ref{tab:ablation} isolates each component.
Density weighting alone \emph{hurts} (A$\to$C: $-0.5$), but adding
adaptive temperature reverses the trend (C$\to$D: $+2.4$).
The difficulty head (D$\to$F) has negligible impact on R@100, confirming
that AT-DW-InfoNCE is the sole driver of gains.

\begin{figure}[t]
\centering
\begin{tikzpicture}
\begin{axis}[
    width=\columnwidth,
    height=4.5cm,
    xlabel={Density weight $\alpha$},
    ylabel={Recall@100},
    xmin=-0.2, xmax=3.3,
    ymin=82.0, ymax=85.5,
    xtick={0, 0.5, 1, 1.5, 2, 2.5, 3},
    ytick={82, 83, 84, 85},
    grid=major,
    grid style={dashed, gray!15},
    tick label style={font=\scriptsize},
    label style={font=\scriptsize},
    legend style={at={(0.02,0.02)}, anchor=south west, font=\tiny,
                  draw=none, fill=white, fill opacity=0.8},
]
\draw[gray!40, dashed, thick] (axis cs:-0.2, 83.0) -- (axis cs:3.3, 83.0);
\node[font=\tiny, gray!80, anchor=south west] at (axis cs:0.05, 83.02) {InfoNCE};

\addplot[red!70!black, thick, mark=square*, mark size=1.5pt] coordinates {
    (0, 83.0) (1, 82.7) (2, 82.5)
};
\addlegendentry{DW-InfoNCE}

\addplot[teal, thick, mark=*, mark size=1.5pt] coordinates {
    (0.5, 84.3) (1.5, 84.8) (2, 84.9) (2.5, 84.8) (3.0, 84.7)
};
\addlegendentry{AT-DW-InfoNCE}

\draw[purple!60, thick, dashed] (axis cs:2.0, 82.0) -- (axis cs:2.0, 85.5);
\node[purple!80!black, font=\tiny\bfseries, anchor=south] at (axis cs:2, 85.0)
  {$\alpha^* = 2.0$};
\end{axis}
\end{tikzpicture}
\caption{\textbf{Density weight sweep.}
  DW-InfoNCE without adaptive temperature degrades monotonically
  (red). AT-DW-InfoNCE peaks at $\alpha^* = 2.0$ (teal), matching the
  formally derived optimum (Proposition~\ref{prop:alpha}).}
\label{fig:alphasweep}
\end{figure}

\textbf{Alpha sweep.}
Figure~\ref{fig:alphasweep} extends the ablation across five values of
$\alpha$ from $0.5$ to $3.0$.
Without adaptive temperature, density weighting degrades monotonically.
With adaptive temperature, performance peaks at
$\alpha^*\!=\!2.0$---the formally derived optimum
(Proposition~\ref{prop:alpha})---and degrades symmetrically for both
under- and over-weighting.

\subsection{LIMIT and BEIR}

\textbf{LIMIT (vacuous regime).}
On LIMIT ($\delta = 0.04$, $\dstar = 1$), all methods perform below
5\% R@100 (DART: $3.7{\pm}0.2$, GTE-base: 4.4, BGE-base: 4.5).
This is not a failure of DART---it is a \emph{validation of the theory},
which correctly predicts that capacity-aware training provides no
advantage when the Voronoi bottleneck is not tight.

\textbf{BEIR (zero-shot transfer).}
DART averages $44.3{\pm}0.2$ nDCG@10 across 6 BEIR datasets,
retaining 99.3\% of the pretrained backbone's zero-shot performance
($44.6$), whereas standard InfoNCE fine-tuning degrades to
$41.6{\pm}0.8$ ($-3.0$ points, a 6.6\% relative drop).
This demonstrates that capacity-aware training substantially mitigates
the catastrophic forgetting typically caused by domain-specific fine-tuning.

\subsection{CUS Diagnostics}

Per-query CUS (computed via a single Matryoshka forward pass at
$d'\!=\!r^*$) predicts retrieval failure (R@1\,$=$\,0) with
AUC\,$>$\,$0.8$ on product search, without requiring relevance labels.
Across pretrained encoders, aggregate CUS correlates with downstream
nDCG@10 (Spearman $\rho > 0.9$), confirming its utility as a
model-selection criterion.
Combined with the density regime check (\S\ref{sec:theory}), CUS
enables practitioners to determine \emph{a priori} whether
capacity-aware training will help for their corpus.

\section{Deployment Considerations}
\label{sec:deploy}

DART is designed for minimal production friction.
We detail the integration path for teams operating dual-encoder
retrieval systems, as motivated by our experience with
Walmart's product search infrastructure:

\textbf{Zero inference overhead (default mode).}
Without routing, DART modifies only the training loss (Eq.~\ref{eq:atdw}).
The deployed encoder has \emph{identical} architecture, latency, and memory
footprint to a standard dual encoder.
No index rebuilding is required beyond the normal training cycle.
Optional tiered routing (Appendix~\ref{app:routing}) adds a single
$O(3d)$ classification step per query.

\textbf{Drop-in integration.}
AT-DW-InfoNCE requires one additional input per query: the relevance count
$k_i$ (number of relevant documents).
In product search, this is readily available from catalog annotations
or curated relevance judgments.
The training script is a $<$\,50-line modification to any InfoNCE training loop.

\textbf{Density regime check.}
Before investing in DART training, practitioners can compute
$\delta = |E|/n$ from their Qrel graph and check Warren's bound.
If $\delta \ll 1$ (vacuous regime), standard InfoNCE suffices.
Our product-search corpus ($\delta = 1.43$) is in the moderate regime
where DART provides measurable gains.

\textbf{CUS for production monitoring.}
CUS can be computed periodically (one MRL forward pass per query) without
relevance labels, enabling automated alerts when capacity utilization
degrades---e.g., after catalog expansion or query distribution shift.

\textbf{When not to use DART.}
(i)~When only single-positive supervision is available (click logs without
Qrel counts). (ii)~When $\delta \ll 1$ (vacuous regime).
(iii)~When R@1 is the primary metric (DART's gains concentrate at higher cutoffs like R@100).

\section{Related Work}
\label{sec:related}

\textbf{Dense retrieval.}
DPR \cite{karpukhin2020dpr} established contrastive training for open-domain
QA. ANCE \cite{xiong2021ance} introduced hard-negative mining. ColBERT
\cite{khattab2020colbert} and ColBERTv2 \cite{santhanam2022colbertv2} use
late-interaction scoring.
Modern instruction-following embedders
\cite{lee2024gecko,wang2022e5} extend the paradigm further.
Strong baselines include GTE \cite{li2023gte}, BGE \cite{xiao2023bge},
and Nomic \cite{nussbaum2024nomic}.

\textbf{Theoretical limits.}
\citet{weller2025limit} proved sign-rank bounds and introduced LIMIT.
Our Theorem~\ref{thm:main} provides the geometric (Voronoi)
interpretation, translating their algebraic result into an actionable
diagnostic (CUS) and training objective (DART).

\textbf{Contrastive training.}
Hard-negative mining \cite{robinson2021hard,hofstatter2021efficiently} and
temperature tuning improve contrastive training, but no prior work
\emph{derives} optimal negative weighting from embedding capacity theory.
We compare against focal loss and temperature scheduling in Table~\ref{tab:main}.

\textbf{Tiered retrieval.}
Multi-stage retrieval pipelines are widely deployed
\cite{nogueira2019mono,ma2023fine}.
DART's difficulty head provides a natural interface for routing;
Appendix~\ref{app:routing} sketches theoretical conditions under
which such routing is Pareto-optimal, though we leave empirical
validation to future work on corpora with heterogeneous difficulty.

\textbf{Matryoshka representations.}
\citet{kusupati2022mrl} train variable-dimension encoders.
We leverage MRL truncation to compute CUS and trace the dimension cliff,
but our training objective is orthogonal to MRL.

\section{Conclusion}
\label{sec:conc}

We presented DART, a capacity-aware training method for dense retrieval
grounded in a formal Voronoi--sign-rank equivalence
(Theorem~\ref{thm:main}).
Three contributions make the theory actionable in production:
(1)~CUS provides label-free failure prediction (AUC\,$>$\,$0.8$)
and production monitoring;
(2)~the density-regime check tells practitioners \emph{a priori}
whether DART will help;
(3)~AT-DW-InfoNCE with $\alpha^* = 2.0$ improves $+1.9$ R@100 on
product search over InfoNCE ($p < 0.001$), outperforming focal loss
and temperature-schedule alternatives, with zero inference overhead.

The theory's two-regime prediction is validated in both directions:
gains on product search ($\delta = 1.43$), no improvement on LIMIT
($\delta = 0.04$).
This is, to our knowledge, the first retrieval training method with
\emph{a priori} testable predictions about when it will and will not help.

\textbf{Limitations.}
(i)~AT-DW-InfoNCE requires per-query relevance counts at training time,
limiting applicability to settings with curated judgments.
(ii)~The primary improvement ($+1.9$ R@100), while statistically
significant and achieved at zero inference cost, is a modest absolute
effect size; the theory predicts larger gains on corpora closer to
the capacity boundary ($d \approx \dstar$).
(iii)~Warren's bound is not tight for all matrix families.
(iv)~The product-search corpus uses LLM-generated queries; validation
on real user queries from production systems is future work.

\textbf{Future work.}
Extending to multi-vector scoring (ColBERT-style), deriving closed-form CUS
via spectral methods, adapting to click-log supervision
(e.g., approximating $\tilde{\rho}_i$ from click-through rates or
impression counts, which are readily available in production),
validating on real user query logs from production search systems,
and evaluating tiered routing
(Appendix~\ref{app:routing}) on corpora with heterogeneous query difficulty.

\bibliographystyle{ACM-Reference-Format}
\bibliography{refs}

\appendix

\section{Routing Policy: Theorem and Algorithm}
\label{app:routing}
\begin{small}
\noindent The difficulty head trained in \S\ref{sec:dart} enables tiered routing
at inference. Algorithm~\ref{alg:routing} describes the full procedure.
\begin{theorem}[Pareto Optimality with Regret Bound]
  \label{thm:pareto}
  Let $\pi^*$ denote the oracle routing policy and $\hat{\pi}$ the learned
  policy with per-query accuracy $p = \Pr[\hat{t}(q) = t^*(q)]$.
  Let $\Delta a_{\max} = \max_{t \neq t^*} |a(t^*) - a(t)|$ be the maximum
  accuracy gap between tiers. Then:
  \emph{(1)~Pareto optimality:} If $p > 2/3$, no alternative policy
  simultaneously improves both accuracy and latency over $\hat{\pi}$.
  \emph{(2)~Regret bound:} $R(\hat{\pi}) = \E[a(\pi^*)] - \E[a(\hat{\pi})]
  \leq (1 - p)\,\Delta a_{\max}$.
  \emph{(3)~Latency bound:} $\E[\ell(\hat{\pi})] \leq
  p\,\E[\ell(\pi^*)] + (1 - p)\,\ell_3$.
\end{theorem}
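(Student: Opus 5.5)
The plan is first to fix a concrete model, since the statement leaves the objects implicit. There are three tiers $t\in\{1,2,3\}$ with latencies $\ell_1\le\ell_2\le\ell_3$ and per-query accuracies $a_q(t)$. The oracle $t^*(q)$ is the cheapest tier attaining $\max_t a_q(t)$, so $a(\pi^*)=a_q(t^*(q))$ and $\ell(\pi^*)=\ell_{t^*(q)}$. The learned policy is $\hat\pi(q)=\hat t(q)$. I will read $p$ as a \emph{uniform} routing accuracy: $\Pr[\hat t=t^*\mid t^*=t]=p$ for every tier $t$, and likewise conditioned on the predicted tier $\hat t$. This uniformity, or at least independence of the error event from $t^*$, is needed for part (3) to hold as stated, and I would add it explicitly as a hypothesis.

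For part (2), I split on the event $C=\{\hat t(q)=t^*(q)\}$. On $C$ the two policies coincide, so the accuracy difference is $0$. On $C^c$ the difference is $a_q(t^*)-a_q(\hat t)\le\Delta a_{\max}$ by definition of $\Delta a_{\max}$. Taking expectations gives
\[
R(\hat\pi)\le\Pr[C^c]\,\Delta a_{\max}=(1-p)\Delta a_{\max}.
\]

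Part (3) uses the same decomposition:
\[
\E[\ell(\hat\pi)]=\E[\ell_{t^*}\mathbf 1_C]+\E[\ell_{\hat t}\mathbf 1_{C^c}].
\]
The second term is at most $(1-p)\ell_3$, since $\ell_3$ is the largest latency. For the first term, independence of $C$ from $t^*$ gives $\E[\ell_{t^*}\mathbf 1_C]=p\,\E[\ell(\pi^*)]$. This independence step is the only delicate point in (2)--(3).

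Part (1) is the main obstacle, because ``no alternative policy'' is too strong over all policies: the oracle itself dominates $\hat\pi$ whenever $p<1$. I would therefore restrict the alternatives to policies that only see the head's output, namely (possibly randomized) remappings $\sigma:\{1,2,3\}\to\Delta(\{1,2,3\})$ applied to $\hat t$. The key quantity is the precision $\Pr[t^*=t\mid\hat t=t]$, which I would show is at least $p>2/3$ under the uniform-accuracy assumption. Then I examine any remapping that moves mass away from some predicted tier $t$ to a tier $t'$:
\begin{itemize}
\item If $t'>t$, latency strictly increases, so this move cannot improve both objectives.
\item If $t'<t$, more than $2/3$ of the affected queries were correctly routed. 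They lose accuracy, since $t^*$ is the cheapest optimal tier and $t'<t^*$ is strictly worse. At most $1/3$ of them had $t^*\le t'$, and those gain at most what the majority loses.
\end{itemize}
The threshold $2/3$ is what is needed to beat the worst case. With three tiers, the remaining mass below $1/3$ could be split between two wrong tiers. Comparing gains and losses weighted by the tier accuracy gaps, I would also need a mild regularity assumption on those gaps, for example that they are comparable across queries. Summing over $t$ and mixing over randomized $\sigma$ then shows that any change either raises expected latency or lowers expected accuracy. That gives Pareto optimality within the class. Making this gain/loss comparison rigorous without extra assumptions on the accuracy gaps is where I expect the argument to need the most care.
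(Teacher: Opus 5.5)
Your arguments for (2) and (3) are the paper's. You split on whether $\hat t(q)=t^*(q)$, charge each misrouted query at most $\Delta a_{\max}$ in accuracy and at most $\ell_3$ in latency, and let correctly routed queries pay the oracle latency. The independence hypothesis you add fills a real hole in the paper's step ``correctly routed queries pay $\ell(\pi^*)$''. That step silently uses $\E[\ell_{t^*}\mathbf{1}_C]=p\,\E[\ell(\pi^*)]$, and without it (3) is false. For instance, take $p=0.7$, route all $t^*=3$ queries (mass $0.7$) correctly, and send all $t^*=1$ queries (mass $0.3$) to tier~3. Then $\E[\ell(\hat\pi)]=\ell_3$, whereas $p\,\E[\ell(\pi^*)]+(1-p)\ell_3=0.79\,\ell_3+0.21\,\ell_1<\ell_3$.

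For (1), your plan is the paper's ordering heuristic stated more carefully: moving up costs latency, and moving down costs the correctly placed majority accuracy. You are right that the claim cannot hold over all policies, though not because $\pi^*$ always dominates: if all misroutes go to cheaper tiers, $\hat\pi$ is faster than $\pi^*$.

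The genuine gap is the step ``summing over $t$''. That no single upward or downward move improves both objectives does not imply that a combination of moves cannot. An upward move can buy accuracy cheaply while a downward move saves much latency at little accuracy cost. Concretely:
\begin{itemize}
\item Take $(\ell_1,\ell_2,\ell_3)=(0,1,100)$ and three equiprobable query types with per-tier accuracies $(1,1,1)$, $(0.5,1,1)$, $(0,0.5,1)$. Then $t^*=1,2,3$, and each tier of shortfall below $t^*$ costs the same $0.5$.
\item Let the router be correct with probability $0.7$ on each type and split its errors $0.15/0.15$. Recall and precision both equal $0.7>2/3$, and $C$ is independent of $t^*$, so all your hypotheses hold.
\item Here $\hat\pi$ has accuracy $0.9$ and latency $101/3$.
\item Remap predicted tier~1 to tier~2 and send a random $2\%$ of predicted tier~3 down to tier~2. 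This gives accuracy $\approx 0.948$ and latency $\approx 33.34<101/3$.
\end{itemize}
So (1) fails even inside your restricted class and with uniform accuracy gaps. What is missing is a hypothesis coupling accuracy gaps to \emph{latency} gaps: the accuracy gained per unit latency by any upward move must be at most the accuracy lost per unit latency by any downward move. Under, e.g., equal latency steps and a fixed accuracy loss per tier below $t^*$, your linear comparison does close. In that model precision above $1/2$ already suffices, so $2/3$ is not what drives the result. The paper's one-line proof of (1) does not address this either; the extra hypothesis belongs in the statement.
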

\begin{proof}
  (1)~Tier costs $\ell_1 < \ell_2 < \ell_3$ and accuracies
  $a_1 \leq a_2 \leq a_3$ are ordered. Any query routed to a cheaper
  tier loses accuracy; with $p > 2/3$, re-routing any optimally placed query worsens one axis.
  (2)~Each misrouted query loses at most $\Delta a_{\max}$; fraction $(1-p)$ are misrouted.
  (3)~Correctly routed queries pay $\ell(\pi^*)$; misrouted queries pay at most $\ell_3$.
\end{proof}
\vspace{-0.5em}
\begin{algorithm}[H]
  \caption{DART Tiered Retrieval (optional routing)}
  \label{alg:routing}
  \begin{algorithmic}[1]
    \REQUIRE Query $q$, index $\mathcal{I}$, encoder $f_\theta$ with head $W_t$
    \ENSURE Retrieved document $\hat{d}$
    \STATE $\mathbf{h}_q \leftarrow f_\theta(q)$
    \STATE $\hat{t} \leftarrow \arg\max\; W_t \mathbf{h}_q$
      \COMMENT{$O(3d)$, no Qrel lookup}
    \IF{$\hat{t} = \mathrm{MIPS}$}
      \RETURN FAISS$(\mathbf{h}_q, \mathcal{I})$
    \ELSIF{$\hat{t} = \mathrm{MaxSim}$}
      \STATE $\mathcal{C} \leftarrow \mathrm{ANN}(\mathbf{h}_q, k{=}50)$
      \RETURN $\arg\max_{d \in \mathcal{C}} \mathrm{MaxSim}(q, d)$
    \ELSE
      \STATE $\mathcal{C} \leftarrow \mathrm{ANN}(\mathbf{h}_q, k{=}50)$
      \RETURN $\arg\max_{d \in \mathcal{C}} s_{\mathrm{CE}}(q, d)$
    \ENDIF
  \end{algorithmic}
\end{algorithm}
\end{small}

\end{document}